\newtheorem{theorem}{Theorem}
\newtheorem{lemma}[theorem]{Lemma}
\newtheorem{corollary}[theorem]{Corollary}
\theoremstyle{definition}
\newtheorem{definition}[theorem]{Definition}
\newtheorem{rmk}[theorem]{Remark}
\newtheorem{example}{Example}[section]
\numberwithin{equation}{section}  
\numberwithin{theorem}{section} 
\DeclareMathOperator{\co}{co}
\newcommand{\ep}{\varepsilon}
\title{Continuous-Time Quantum State Transfer with a Generalized Laplacian}
\author{Yujia Shi}
\affil{Department of Physics, Creighton University}
\date{\today}
\begin{document}

\maketitle

\begin{abstract}
Quantum walks generated by the adjacency matrix or the Laplacian are known to exhibit low transfer fidelity on general graphs. In this paper, we study continuous-time quantum walks governed by the generalized Laplacian operator \(L_k = A + k D\), where \(A\) is the adjacency matrix, \(D\) is the degree matrix, and \(k\) is a real-valued parameter. Recent work of Duda, McLaughlin, and Wong showed that in the single-excitation Heisenberg (XYZ) spin model, one can realize walks generated by this family of operators on signed weighted graphs. Motivated by earlier studies on vertex-weighted graphs, we demonstrate that for certain graphs, tuning the parameter \(k\) can significantly enhance the fidelity of state transfer between endpoints.
\end{abstract}

\paragraph{Keywords:} Quantum state transfer, Continuous-time quantum walks, Generalized Laplacian.
\paragraph{MSC: } 05C50

\section{Introduction}

Quantum state transfer on graphs, implemented through continuous-time quantum walks, has been widely studied as a model for quantum communication \cite{Bose_2003,Christandl_2005}. In this setting, the dynamics are governed by a Hamiltonian derived from the underlying graph, most commonly the adjacency matrix or the Laplacian. An algebraic perspective on this problem was developed in \cite{godsil2012state}. For the path graph \( P_n \), however, it is well known that these standard choices lead to poor transfer fidelities between the endpoints once the path becomes long. This limitation has motivated various modifications of the model, such as vertex-weighted graphs and alternative Hamiltonians.
In particular, this approach has been studied on several classes of graphs: for instance, by Kirkland and von Bommel on paths \cite{path}, 
by Lippner and the author on graphs with an involution \cite{p1}, 
and more generally in \cite{p2} for pairs of vertices with restricted forms of cospectrality.

In \cite{duda2025}, Duda, McLaughlin, and Wong showed that in the single-excitation Heisenberg (XYZ) spin model, one can realize continuous-time quantum walks on signed weighted graphs generated by the adjacency, Laplacian, and signless Laplacian operators, as well as the generalized Laplacian
\[
L_\alpha =A+(\alpha -1)D,
\]
where \( A \) is the adjacency matrix, \( D \) is the degree matrix, and \(\alpha\) is a real-valued parameter. This provides a natural framework for studying quantum walks driven by the one-parameter family of Hamiltonians \( H_\alpha = -\gamma L_\alpha \). Here, \(\gamma\) denotes the jumping rate of the walk, but since our focus is on transfer fidelity, we set \(\gamma = 1\) and consider walks generated by \( L_\alpha \) alone. 
For notational simplicity, we set $k := \alpha - 1$ and hence study the Hamiltonian \(H=-L_k\).
Our main result is that by tuning the parameter $k$, one can obtain quantum state transfer with high probability under the dynamics generated by $L_k$. 
We establish an explicit relation between $k$ and the transfer probability in a general setting.

The remainder of the paper is organized as follows. In Section~\ref{sec:prelim}, we review the necessary background, including definitions, spectral decomposition, and structural properties relevant to quantum state transfer. Section~\ref{sec:mainresult} discusses the existence of high-fidelity transfer using the generalized Laplacian and illustrates it with two examples.


\section{Preliminaries}
\label{sec:prelim}

\subsection{Quantum State Transfer on Graphs}

Let \( G = (V, E) \) be an undirected graph with \( n = |V| \) vertices. The Hilbert subspace of the system is \( \mathbb{C}^n \), with standard basis vectors \( \{e_u\}_{u \in V} \) corresponding to the vertices. The system evolves in continuous time according to the Schrödinger equation
\[
i \frac{d}{dt} \psi(t) = H \psi(t),
\]  
where \( H \in \mathbb{R}^{n \times n} \) is a Hermitian matrix, and here we adopt the standard convention $\hbar = 1$. The solution is
\[
\psi(t) = e^{-i H t} \psi(0).
\]

If the walk is initialized at vertex \( u \), i.e., \( \psi(0) = e_u \), then the probability of reaching vertex \( v \) at time \( t \) is
\[
P_{u \to v}(t) = \left| \langle e_v| e^{-i H t} |e_u \rangle \right|^2.
\]
For convenience, we denote the time-evolution operator by
\[
U(t) = e^{-i H t}.
\]
This is an $n \times n$ unitary matrix, and we write $U(t)_{u,v}$ for its $(u,v)$-entry. We say that \emph{perfect state transfer} (PST) from \( u \) to \( v \) occurs at time \( t > 0 \) if
\[
|U(t)_{u,v}| = 1.
\]
More generally, we are interested in \emph{high-fidelity transfer}, where the transition probability attains values arbitrarily close to 1 at certain times.

Let \( A \) be the adjacency matrix of \( G \), and \( D \) the degree matrix, i.e., \( D_{ii} = \deg(v_i) \). Two standard choices of Hamiltonian are:
\begin{itemize}
    \item \( H = -A \), the adjacency model, and
    \item \( H = -A + D \), the Laplacian model.
\end{itemize}

In this work, we consider the one-parameter family
\[
H = -(A + k D) , \quad k \in \mathbb{R}.
\]
As shown in \cite{duda2025}, the single-excitation Heisenberg (XYZ) spin model, with suitable coupling constants, can realize this generalized operator on signed weighted graphs.





\subsection{Conditions for High-Fidelity Transfer}\label{sec:sd}

Let the eigenvalues of \( L_k \) be \(\lambda_1,\lambda_2, \ldots,\lambda_n\)
with corresponding orthonormal eigenvectors \( \psi_1, \psi_2,\ldots, \psi_n \). Then
\[
L_k = \sum_{j=1}^n \lambda_j \psi_j {\psi_j}^\top.
\]
The time-evolution operator becomes
\[
U(t)_{u,v} = \sum_{j=1}^n e^{i \lambda_j t} \psi_j(u) \psi_j(v).
\]

Perfect state transfer is well known to be rare, requiring both vertices to be strongly cosepctral, together with eigenvalues satisfying the ratio condition, as first shown by Godsil \cite{godsil2012state}. Among the relaxations on perfect state transfer, a common scenario for achieving high transfer fidelity arises is when there exist two eigenvectors that are approximately
\(
(e_u \pm e_v)/\sqrt{2},
\)
so that \( U(t)_{u,v} \) depends primarily on these two terms rather than on all \( n \) eigenvalues. This approach was first discussed in the context of quantum tunneling~\cite{Tunneling_2012}, 
and was later adopted in the study of quantum state transfer~\cite{involution,path,p1,p2}.
More precisely, we seek a pair of eigenvectors \(\psi^{(1)}, \psi^{(2)}\) such that:
\begin{enumerate}
    \item \(\psi^{(1)}\) and \(\psi^{(2)}\) are strongly localized at \(u\) and \(v\),
    \item the ratios \(\psi^{(1)}(u)/\psi^{(1)}(v)\) and \(\psi^{(2)}(u)/\psi^{(2)}(v)\) are close to \(\pm 1\),
\end{enumerate}
\subsection{Graphs with Involutions}

Let \( \sigma: V \to V \) be an involution on a graph \( G \). The vertex set decomposes as
\[
V = N \cup S \cup \sigma(N),
\]
where \( S \) is the fixed-point set, and \( \sigma(N) \) is the image of \( N \) under \( \sigma \).

In this setting, the eigenvectors of \( H \) can be chosen to be either symmetric or antisymmetric with respect to \( \sigma \):
\[
[a \; b \; a] \quad \text{(symmetric)} \qquad \text{or} \qquad [c \; 0 \; -c] \quad \text{(antisymmetric)}.
\]
A detailed proof of this property can be found in \cite{involution}.
\begin{rmk}
The involution provides a quick method to identify pairs of cospectral vertices.  
In particular, when \(u = \sigma(v)\) lies outside the fixed-point set, the cospectrality is infinite, as in the case most relevant to this paper.
\end{rmk}
Another way to find a pair of strongly cospectral vertices is by counting the walks. In \cite{Tunneling_2012}, Lin, Yau, and Lippner define cospectrality and prove the following lemmas.
\begin{definition}
The \emph{cospectrality} between two vertices \(u,v \in V\), denoted \(\operatorname{co}(u,v)\), is the largest integer \(m\) such that for every \(k \leq m\), the number of closed walks of length \(k\) starting at \(u\) equals the number of closed walks of length \(k\) starting at \(v\).
\end{definition}

\begin{lemma}
If \(G\) admits an involution \(\sigma\) and \(u = \sigma(v)\) for some pair of vertices not in the fixed-point set, then \(\operatorname{co}(u,v) = \infty\).
\end{lemma}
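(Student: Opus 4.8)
The plan is to reduce the walk-counting condition in the definition of cospectrality to a single matrix identity coming from the fact that the involution is a graph automorphism. First I would make explicit the standing convention that an \emph{involution on $G$} means a graph automorphism $\sigma$ with $\sigma^2 = \mathrm{id}$ --- this is the same hypothesis under which the symmetric/antisymmetric eigenvector decomposition recalled above is valid. Let $P$ denote the $n \times n$ permutation matrix of $\sigma$, so that $P e_w = e_{\sigma(w)}$. Because $\sigma$ preserves adjacency, $P$ commutes with the adjacency matrix $A$; equivalently $A = P^{\top} A P$.

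Next I would translate the combinatorial quantity in the definition into linear algebra: the number of closed walks of length $\ell$ that start (and end) at a vertex $w$ is the diagonal entry $(A^{\ell})_{w,w} = e_w^{\top} A^{\ell} e_w$. Since $P$ commutes with $A$, it commutes with every power $A^{\ell}$, and therefore
\[
(A^{\ell})_{\sigma(w),\sigma(w)} = (P e_w)^{\top} A^{\ell} (P e_w) = e_w^{\top} P^{\top} A^{\ell} P e_w = e_w^{\top} A^{\ell} e_w = (A^{\ell})_{w,w}
\]
for every $w \in V$ and every integer $\ell \ge 0$.

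To finish, I would apply this identity with $w = v$ and invoke the hypothesis $u = \sigma(v)$: for all $\ell \ge 0$ the number of closed walks of length $\ell$ starting at $u$ equals the number starting at $v$, which is exactly the assertion $\operatorname{co}(u,v) = \infty$. The hypothesis that $u$ and $v$ lie outside the fixed-point set is used only to guarantee $u \neq v$ (if $u$ were fixed, $u = \sigma(v)$ would force $v = \sigma(u) = u$), so that the conclusion genuinely concerns a pair of distinct vertices.

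I do not anticipate a real obstacle: once ``involution'' is read as ``order-two automorphism,'' the proof is just the observation that conjugation by $P$ fixes $A$ and hence all of its powers. The only points deserving care are stating that convention cleanly and, if one wants the lemma in the signed/weighted setting used elsewhere in the paper, noting that the identical argument goes through with $A$ replaced by the signed weighted adjacency matrix provided $\sigma$ also preserves edge signs and weights.
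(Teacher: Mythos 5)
Your proof is correct and is the standard argument: the permutation matrix of the order-two automorphism commutes with $A$, hence with every power $A^{\ell}$, so the diagonal entries counting closed walks agree at $v$ and $\sigma(v)=u$. The paper itself does not reprove this lemma but cites it from Lin--Yau--Lippner, where the argument is the same commutation/walk-counting observation you give.
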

\begin{lemma}
    If \(\operatorname{co}(u,v) = \infty\), then for every eigenvector, \(\psi_j(u)=\pm\psi_j(v)\)
\end{lemma}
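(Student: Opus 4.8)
The plan is to convert the combinatorial hypothesis on closed walks into a spectral identity, and then to read off the sign condition one eigenspace at a time. Write $M$ for the symmetric matrix whose eigenvectors $\psi_j$ are in question (the adjacency matrix $A$, or more generally the Hamiltonian, with closed-walk counts weighted accordingly). Since the number of closed walks of length $\ell$ starting at a vertex $w$ equals $(M^\ell)_{ww}$, the hypothesis $\operatorname{co}(u,v)=\infty$ says exactly that $(M^\ell)_{uu}=(M^\ell)_{vv}$ for every $\ell\ge 0$.

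Next I would use the spectral decomposition $M=\sum_r \theta_r E_r$, where $\theta_1,\dots,\theta_m$ are the distinct eigenvalues and $E_r$ the orthogonal projection onto the $\theta_r$-eigenspace $V_r$. Then $(M^\ell)_{ww}=\sum_r \theta_r^\ell (E_r)_{ww}$, so the hypothesis becomes $\sum_r \theta_r^\ell\big((E_r)_{uu}-(E_r)_{vv}\big)=0$ for all $\ell\ge 0$; taking $\ell=0,1,\dots,m-1$ yields a Vandermonde system in the distinct nodes $\theta_r$, which is nonsingular, so $(E_r)_{uu}=(E_r)_{vv}$ for every $r$. Because $E_r$ is an orthogonal projection, $(E_r)_{ww}=\langle e_w,E_r e_w\rangle=\|E_r e_w\|^2$, so equivalently $\|E_r e_u\|=\|E_r e_v\|$ for each $r$. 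If all eigenvalues are simple this already finishes the argument, since then $E_r=\psi_r\psi_r^\top$ and $\psi_r(u)^2=(E_r)_{uu}=(E_r)_{vv}=\psi_r(v)^2$.

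In general one must build the eigenbasis carefully. Inside $V_r$ set $a=E_r e_u$ and $b=E_r e_v$; from the previous step $\|a\|=\|b\|$, hence $a-b\perp a+b$. When $a,b$ are linearly independent, take $(a-b)/\|a-b\|$ and $(a+b)/\|a+b\|$ as the first two basis vectors of $V_r$ and complete with an orthonormal basis of $\operatorname{span}(a,b)^{\perp}\cap V_r$; the completing vectors lie in $e_u^\perp\cap e_v^\perp$ because $\langle\psi,e_w\rangle=\langle\psi,E_r e_w\rangle$ for $\psi\in V_r$, so they satisfy $\psi(u)=0=\psi(v)$. For $(a\mp b)/\|a\mp b\|$ one computes $\langle\psi,e_u\rangle=(\|a\|^2\mp\langle a,b\rangle)/\|a\mp b\|$ and $\langle\psi,e_v\rangle=(\langle a,b\rangle\mp\|b\|^2)/\|a\mp b\|$, which are equal up to sign since $\|a\|=\|b\|$; the degenerate cases $a=0$, $b=0$, or $a=\pm b$ are handled directly. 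Concatenating these bases over all $r$ produces an orthonormal eigenbasis of $M$ with $\psi_j(u)=\pm\psi_j(v)$ for every $j$.

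The step needing the most care is the last one: when eigenvalues are repeated, the conclusion holds for a suitably chosen orthonormal eigenbasis rather than for an arbitrary one (the latter is equivalent to the stronger notion of strong cospectrality), so the lemma should be read as producing such a basis — and the equality $\|E_r e_u\|=\|E_r e_v\|$, which is exactly what infinite cospectrality buys us, is precisely what makes $a-b$ orthogonal to $a+b$ and lets the construction go through.
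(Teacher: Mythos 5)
Your proof is correct. Note first that the paper itself does not prove this lemma: it is quoted from Lin, Yau, and Lippner's quantum-tunneling paper, so there is no in-text argument to compare against. Your route is the standard one for results of this type --- translate the closed-walk hypothesis into $(M^\ell)_{uu}=(M^\ell)_{vv}$ for all $\ell$, pass to the spectral idempotents $E_r$, and invert a Vandermonde system in the distinct eigenvalues to get $(E_r)_{uu}=(E_r)_{vv}$ for each $r$ --- and all the computations check out: $\langle E_r e_u, e_u\rangle=\|E_r e_u\|^2$, the orthogonality $a-b\perp a+b$ from $\|a\|=\|b\|$, and the sign calculation for $(a\mp b)/\|a\mp b\|$ are all right. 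Your closing caveat is not just care but a genuine and necessary correction to the lemma as stated: for a degenerate eigenvalue, ``every eigenvector satisfies $\psi(u)=\pm\psi(v)$'' taken literally would force $E_re_u=\pm E_re_v$ (strong cospectrality), which infinite cospectrality alone does not give; what your construction actually delivers, and what the lemma must mean, is that \emph{some} orthonormal eigenbasis has this property (equivalently $\|E_re_u\|=\|E_re_v\|$ for all $r$). In the paper's intended applications (graphs with an involution, where $u=\sigma(v)$) the symmetric/antisymmetric eigenbasis realizes this choice explicitly, so the distinction is harmless there, but it is worth flagging exactly as you did.
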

\section{High-Fidelity Transfer with $L_k$}\label{sec:mainresult}

In \cite{p2}, we provided a method for assigning positive self-loop weights  \(Q\) to the starting vertex \(u\) and the target vertex \(v\) that guarantees high transfer fidelity. In particular, the peak fidelity
\[
F(Q) := \sup_{t > 0} |U(t)_{u,v}|
\]
can be made arbitrarily close to 1.

\begin{theorem}\label{thm:p2}\cite{p2} Let $G(V,E)$ be a finite graph with maximum degree $m$, and $u,v \in V$ fixed. Assume $c \geq d$ where $c = \co(u,v)$ denotes their cospectrality and $d = d(u,v)$ denotes the distance between $u$ and $v$. For any $\ep >0$ we have 
\[ Q > 16 \frac{1}{\ep^{1/\min(2,c-d+1)}} m^{1+\max\left(\frac{1}{2}, \frac{d}{c-d+1}\right)}  \Longrightarrow F(Q) > 1- \ep. \]
And its readout time has an upper bound $t_0 < 2\pi (Q+m)^{d-1}$
\end{theorem}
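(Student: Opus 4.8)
The plan is to realize the two-eigenvector mechanism of Section~\ref{sec:sd} for the self-looped Hamiltonian $H_Q=H+Q\,(e_ue_u^\top+e_ve_v^\top)$, where $H$ is the base graph Hamiltonian, so that $\|H\|\le 2m$. Because $H_Q$ is the rank-two perturbation $QP$ of an operator whose spectrum lies in $[-2m,2m]$, with $P$ the projection onto $W=\operatorname{span}\{e_u,e_v\}$, exactly two eigenvalues $\lambda_\pm$ lie within $O(m)$ of $Q$ and the remaining $n-2$ within $O(m)$ of $0$. I would make this effective through a Schur-complement reduction: the ``heavy'' eigenvalues are the $\lambda$ near $Q$ at which $\lambda I_2-M(\lambda)$ is singular, where
\[
M(\lambda)=QI_2+H_{\{u,v\}}+B^\top(\lambda I-H')^{-1}B,
\]
$H'$ is the principal submatrix of $H$ on $V\setminus\{u,v\}$ and $B$ joins $\{u,v\}$ to the rest. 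Since $\|(\lambda I-H')^{-1}\|\le 1/(\lambda-2m)$ for $\lambda$ near $Q\gg m$, the series $(\lambda I-H')^{-1}=\sum_{j\ge0}\lambda^{-j-1}(H')^{j}$ converges, the entries of $M$ become explicit walk-counting sums, and $\lambda_\pm$ are located in a disc of radius $O(m)$ about $Q$ by a Rouch\'e argument; the numbers $\psi_\pm(u),\psi_\pm(v)$ and the leakage $P_{W^\perp}\psi_\pm$ are then read off from the residues of $(\lambda I-H_Q)^{-1}$, or equivalently from the eigenvectors of the symmetric $2\times2$ matrix $M(\lambda_\pm)$ together with a Davis--Kahan estimate.

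Two scales then control everything. For $d:=d(u,v)\ge 2$ the off-diagonal entry of $M(\lambda)$ is $\sum_{j\ge0}\lambda^{-j-1}$ times weighted counts of $u$--$v$ walks of length $j+2$ whose interior avoids $\{u,v\}$; its leading term sits at $j+2=d$, so the effective coupling $g$ has order $m^{O(d)}/Q^{\,d-1}$ and satisfies $|g|\ge 1/(Q+2m)^{\,d-1}$, since a geodesic contributes and all such walks carry the same sign (for $d=1$ the coupling is just the direct edge, of order $1$). Hence the splitting $|\lambda_+-\lambda_-|=2\sqrt{(\delta/2)^2+g^2}\asymp|g|$, which simultaneously forces $\psi_\pm$ to be the symmetric/antisymmetric pair on $W$ and pins the optimal readout time at $t_0\approx\pi/|\lambda_+-\lambda_-|<2\pi(Q+m)^{\,d-1}$. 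The diagonal entries of $M(\lambda)$ differ by an amount $\delta$ equal to the corresponding mismatch between closed walks from $u$ and from $v$ that avoid $\{u,v\}$ internally; here $\co(u,v)=c$ enters via (i) the elementary fact that equal self-loops at $u$ and $v$ form a symmetric local modification and so do not decrease $\co$, and (ii) the walk-generating-function identities of \cite{Tunneling_2012,p1}, which upgrade equality of ordinary closed-walk counts to equality of these doubly restricted counts and thereby force the first mismatch to appear only at length $\ell_0=\min(c+1,2d)$. Consequently $\delta=O\big(m^{O(\ell_0)}/Q^{\,\ell_0-O(1)}\big)$, the eigenvectors of $M(\lambda_\pm)$ agree with $(e_u\pm e_v)/\sqrt2$ up to an angular error $\eta=\Theta(\delta/g)$, and the hypothesis $c\ge d$ is exactly what makes $\eta$ small once $Q$ is past the stated threshold.

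To conclude, one assembles the error budget in $U(t)_{u,v}=\sum_je^{i\lambda_j t}\psi_j(u)\psi_j(v)$. The two heavy terms contribute $\tfrac12\cos2\eta\,(e^{i\lambda_+t}-e^{i\lambda_-t})$, of modulus $\cos2\eta=1-O(\eta^2)$ at $t=t_0$; the leakage $\|P_{W^\perp}\psi_\pm\|=O(\sqrt m/Q)$ multiplies this by $1-O(m/Q^2)$; and because $P_{W^\perp}\psi_\pm$ has no component at $u$ or $v$, the tail is controlled by Cauchy--Schwarz as $\sum_{j\ne\pm}|\psi_j(u)\psi_j(v)|\le\big(1-\psi_+(u)^2-\psi_-(u)^2\big)^{1/2}\big(1-\psi_+(v)^2-\psi_-(v)^2\big)^{1/2}=O(m/Q^2)$, so $|U(t_0)_{u,v}|\ge 1-O(m/Q^2)-O(\eta^2)$. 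Driving each error below a suitable fraction of $\epsilon$, together with the admissibility requirement that $Q$ exceed a fixed multiple of $m$, yields the threshold; three competing requirements --- convergence of the resolvent expansion ($Q$ a fixed multiple of $m$), suppression of delocalization ($Q$ of order $\sqrt{m/\epsilon}$), and suppression of symmetry-breaking ($Q$ of order $m^{\,d/(c-d+1)}\epsilon^{-1/(c-d+1)}$, or an $m^2$-type bound once $c\ge 2d-1$) --- are packaged into one clean formula, which is why a ``$+1$'' appears on the exponent of $m$ and the exponents split into a $\max$ and a $\min$ over these regimes.

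The main obstacle is carrying all of this out \emph{quantitatively with explicit constants}: asymptotic perturbation theory gives the $O(Q^{-j})$ rates for free but not the factor $16$ or the honest power of $m$, so one must track constants through the geometric-series bound $\|(\lambda I-H')^{-1}\|\le 1/(\lambda-2m)$, the Rouch\'e localization of $\lambda_\pm$, the Davis--Kahan step, and the walk-count estimates on $g$ and $\delta$. The secondary, more combinatorial obstacle is establishing $\delta=O\big(m^{O(\ell_0)}/Q^{\,\ell_0-O(1)}\big)$ with $\ell_0=\min(c+1,2d)$ from $\co(u,v)=c$ --- i.e.\ pushing the cospectrality hypothesis through the two-vertex deletion $G-u-v$ --- which I expect to follow from the walk-generating-function machinery already present in \cite{Tunneling_2012,p1} together with the symmetry of the self-loop modification.
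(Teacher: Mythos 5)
Your proposal follows essentially the same route the paper relies on: the theorem is imported from \cite{p2}, and the only proof material this paper exposes (inside the proof of Corollary~\ref{cor:revised}) is exactly your mechanism --- two eigenvalues split off into $[Q-m,\,Q+m]$ away from the bulk in $[-m,m]$, with the eigenvector data controlled by the walk-generating functions $Z_{xy}(\lambda)=\sum_k n_k(xy)\lambda^{-k}$ over walks avoiding $u,v$ internally and bounded via $n_k(xy)\le m^k$, which is precisely your Schur-complement series $B^\top(\lambda I-H')^{-1}B$. The approach matches; the only quibble is that you bound the bulk spectrum by $2m$ where the paper uses $m$, which affects only the explicit constants you already flag as the remaining work.
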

In essence, Theorem~\ref{thm:p2} shows that when the self-loop weight $Q$ is sufficiently large, 
both conditions from Section~\ref{sec:sd} are satisfied, ensuring that the peak fidelity $F(Q)$ 
can be made arbitrarily close to 1.

\begin{corollary}\label{cor:revised}
Theorem~\ref{thm:p2} also holds for negative values of \(Q\), with the same bound applied to \(|Q|\).
\end{corollary}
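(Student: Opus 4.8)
The plan is to reduce the negative-$Q$ case to the already-established positive-$Q$ case by exhibiting a unitary equivalence (a ``gauge transformation'') between the two Hamiltonians. Recall that attaching a self-loop of weight $Q$ at vertex $w$ means adding $Q\, e_w e_w^\top$ to the Hamiltonian. The key observation is that on a graph, a vertex $w$ can only reach vertex $u$ or $v$ through walks, and each step along an edge contributes a factor from the off-diagonal adjacency entries; meanwhile the diagonal self-loop entries enter $U(t)_{u,v}=\langle e_v|e^{-iHt}|e_u\rangle$ in a more subtle way. So first I would try the simplest gauge: conjugation by a diagonal $\pm 1$ matrix. If we write $H_Q$ for the Hamiltonian with self-loops $+Q$ at $u$ and $v$ and $H_{-Q}$ for the one with $-Q$, then conjugating $H_Q$ by a signed diagonal matrix flips the signs of some off-diagonal entries but leaves the diagonal untouched, which is not by itself enough. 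The correct move is to combine this with the elementary fact that $-\overline{H_{-Q}}$ (negate and complex-conjugate, but our matrices are real so this is just $-H_{-Q}$) has the same spectrum-magnitude structure; more precisely, observe that $e^{-i(-H)t}=e^{iHt}=\overline{e^{-iHt}}$ for real $H$, so $|U_{-H}(t)_{u,v}|=|U_H(t)_{u,v}|$. Thus replacing $H$ by $-H$ — equivalently replacing $L_k=A+kD$ with self-loop $Q$ by $-A-kD$ with self-loop $-Q$ — does not change any transfer probability.

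The second and essential step is then to restore the sign of $A+kD$ while carrying the now-negative self-loop along. For this I would use the standard bipartite-type trick only if the graph were bipartite; in general it is cleaner to note that $-A-kD$ with a self-loop $-Q$ at $u,v$ is precisely $-(A+kD)$ plus $-Q(e_ue_u^\top+e_ve_v^\top)$, and we want to compare it to $A+kD$ plus $|Q|(e_ue_u^\top+e_ve_v^\top)=A+kD-Q(e_ue_u^\top+e_ve_v^\top)$ when $Q<0$. Writing $Q'=|Q|>0$, the matrix whose peak fidelity Theorem~\ref{thm:p2} controls is $M_+:=A+kD+Q'(e_ue_u^\top+e_ve_v^\top)$, while the matrix we care about is $M_-:=A+kD-Q'(e_ue_u^\top+e_ve_v^\top)$. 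By the computation just described, the transfer probability generated by $M_-$ equals that generated by $-M_-=-(A+kD)+Q'(e_ue_u^\top+e_ve_v^\top)$. Finally I would absorb the sign of $A+kD$: the matrix $-(A+kD)$ is the generalized Laplacian of $G$ with every edge weight negated and $k$ replaced by $-k$ (i.e.\ $-A+(-k)D$ differs from $-(A+kD)$ only in... ) — here one must be a little careful, since negating all entries of $A$ also negates $D$'s contribution in the wrong way, so the honest statement is simply that $-(A+kD)$ is a real symmetric matrix supported on the same graph with the same maximum degree $m$, the same distance $d(u,v)$, and — crucially — the same cospectrality $\co(u,v)$, because $\co(u,v)$ for $-(A+kD)$ counts closed walks with signs but the equality $\psi_j(u)=\pm\psi_j(v)$ for the original is unaffected by negating the matrix (eigenvectors are unchanged, eigenvalues flip sign). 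Therefore Theorem~\ref{thm:p2} applies verbatim to $-(A+kD)+Q'(e_ue_u^\top+e_ve_v^\top)$ with the same bound on $Q'=|Q|$, giving $F>1-\ep$, and the readout-time bound transfers identically since $|{-(A+kD)}|$ has the same maximum degree $m$.

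A cleaner route that avoids the bookkeeping above: simply re-examine the proof of Theorem~\ref{thm:p2} in \cite{p2} and note that the only place the sign of $Q$ is used is in locating the two eigenvectors that approximate $(e_u\pm e_v)/\sqrt2$. For large positive $Q$ these are the two largest eigenvalues; for large negative $Q$ they are the two smallest (most negative) eigenvalues, with the roles of the symmetric and antisymmetric combinations possibly swapped. The perturbation-theoretic estimates bounding $\psi^{(i)}(u)/\psi^{(i)}(v)\to\pm1$ and the localization at $\{u,v\}$ depend only on $|Q|$ relative to $m$, $c$, $d$, since they come from bounding the influence of the bounded operator $A+kD$ (whose norm is controlled by $m$) against the large diagonal perturbation of size $|Q|$. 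Hence every inequality in that proof goes through with $Q$ replaced by $|Q|$ throughout, and the conclusion $F(Q)>1-\ep$ together with $t_0<2\pi(|Q|+m)^{d-1}$ follows.

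I expect the main obstacle to be the second paragraph's sign bookkeeping: one has to make sure that passing from $H$ to $-H$, which preserves $|U(t)_{u,v}|$ for real $H$, genuinely lands us back in the hypotheses of Theorem~\ref{thm:p2} — in particular that $\co(u,v)$, which is defined via closed-walk counts for the original graph and not for the signed matrix $-(A+kD)$, is the right invariant to use. The resolution is that the proof of Theorem~\ref{thm:p2} really only uses the consequence ``$\co(u,v)\ge d$ forces the relevant near-degeneracy and localization,'' and that consequence is a statement about eigenvectors, which are identical for $H$ and $-H$; so no genuine new estimate is needed and the corollary is essentially a symmetry observation. I would therefore present the short argument: for real symmetric $H$, $|U_H(t)_{u,v}|=|U_{-H}(t)_{u,v}|$ for all $t$, hence $F$ and the readout time are unchanged under $H\mapsto -H$; applying this with $H$ the generalized Laplacian-plus-self-loop and noting $-(L_k + Q(e_ue_u^\top+e_ve_v^\top)) = -L_k + |Q|(e_ue_u^\top+e_ve_v^\top)$ when $Q<0$ puts us in the positive-self-loop regime for the matrix $-L_k$, to which Theorem~\ref{thm:p2} applies with the same $m$, $c$, $d$ and the bound on $|Q|$.
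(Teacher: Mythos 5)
Your ``cleaner route'' in the penultimate paragraph is essentially the paper's own proof: the paper re-opens the argument of Theorem~\ref{thm:p2}, notes that for negative $Q$ the two relevant eigenvalues lie below $-m$ rather than above $m$, and checks that the only quantitative input --- the convergence of the walk generating function $Z_{xy}(\lambda)=\sum_k n_k(xy)\lambda^{-k}$, bounded via $n_k(xy)\le m^k$ --- requires only $|\lambda|>m$, so every estimate survives with $Q$ replaced by $|Q|$. That part of your proposal is correct and matches the paper.

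Your first route, the symmetry $|U_H(t)_{u,v}|=|U_{-H}(t)_{u,v}|$ for real symmetric $H$ followed by $-(L_k+Q(E_u+E_v))=-L_k+|Q|(E_u+E_v)$ for $Q<0$, is an attractive alternative, but as you yourself half-concede it does not close on its own: Theorem~\ref{thm:p2} is stated for a graph $G(V,E)$, i.e.\ for a genuine (nonnegative) adjacency matrix with self-loops, and $-L_k$ is not the adjacency matrix of any graph. Asserting that the theorem ``applies verbatim'' to $-(A+kD)+|Q|(E_u+E_v)$ because $m$, $d(u,v)$, and the eigenvector consequence of $\co(u,v)=\infty$ are preserved is precisely the step that forces you back into the proof of Theorem~\ref{thm:p2} to verify that its walk-counting estimates tolerate signed entries --- which is the same work the direct route does, so the detour buys nothing in general. (For bipartite graphs such as $P_n$ and $K_{2,n-2}$ the diagonal $\pm1$ conjugation does make route one fully rigorous, and it is a genuinely shorter argument there; but the corollary is stated for general graphs.) Since you end by presenting the symmetry argument as the one you would write up, be aware that the honest version of the corollary still needs the $|\lambda|>m$ convergence check, exactly as in the paper.
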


\begin{proof}
In the proof of Theorem~\ref{thm:p2}, we use the observation that when $Q>0$, the two largest eigenvalues lie in the interval $[Q-m,\,Q+m]$, separated from the remaining eigenvalues contained in $[-m,m]$. When $\lambda$ is sufficiently large compared with $m$, the two 
conditions from Section~\ref{sec:sd} are met. 

For negative $Q$, the same reasoning applies. 
Although the relevant eigenvalues now lie below $-m$, the convergence argument is unchanged once we replace $\lambda$ by $|\lambda|$. 
To be more specific, for vertices $x,y$, let $\mathcal{P}_{xy}$ denote the set of walks from $x$ to $y$ avoiding $u,v$ except possibly at the endpoints, and define
\[
Z_{xy}(\lambda) = \sum_{P \in \mathcal{P}_{xy}} \lambda^{-|P|}
   = \sum_{k=d(x,y)}^\infty \frac{n_k(xy)}{\lambda^k},
\]
where $n_k(xy)$ is the number of such walks of length $k$.  
Since $n_k(xy)\le m^k$ in a graph of maximum degree $m$, we obtain the bound
\[
Z_{xy}(\lambda) \;\le\; 
\Bigl(\tfrac{m}{|\lambda|}\Bigr)^{d(x,y)} 
\frac{1}{1-\tfrac{m}{|\lambda|}},
\]
which shows that $Z_{xy}(\lambda)$ converges absolutely whenever $|\lambda|>m$. 
Thus the convergence estimates used in the proof remain valid for negative $Q$, 
and the argument proceeds exactly as before, yielding the same lower bound expressed in terms of $|Q|$.

\end{proof}

\begin{proof}
In the proof of Theorem~\ref{thm:p2} in ~\cite{p2}, we used the observation that when $Q>0$, 
the two largest eigenvalues lie in the interval $[Q-m,\,Q+m]$, separated from the 
remaining eigenvalues contained in $[-m,m]$. 
When $\lambda$ is sufficiently large relative to $m$, the eigenvectors exhibit 
the two behaviors described in Section~\ref{sec:sd}.

For negative $Q$, the same reasoning applies. 
Although the relevant eigenvalues then lie below $-m$, the convergence argument 
is unchanged once we replace $\lambda$ by $|\lambda|$. 
Indeed, for vertices $x,y$, let $\mathcal{P}_{xy}$ denote the set of walks from 
$x$ to $y$ avoiding $u,v$ except possibly at the endpoints, and define
\[
Z_{xy}(\lambda) = \sum_{P \in \mathcal{P}_{xy}} \lambda^{-|P|}
   = \sum_{k=d(x,y)}^\infty \frac{n_k(xy)}{\lambda^k},
\]
where $n_k(xy)$ is the number of such walks of length $k$.  
Since $n_k(xy)\le m^k$ in a graph of maximum degree $m$, we obtain the bound
\[
Z_{xy}(\lambda) \;\le\; 
\Bigl(\tfrac{m}{|\lambda|}\Bigr)^{d(x,y)} 
\frac{1}{1-\tfrac{m}{|\lambda|}},
\]
which shows that $Z_{xy}(\lambda)$ converges absolutely whenever $|\lambda|>m$. 
Thus the convergence estimates from the positive case remain valid, and the argument 
proceeds exactly as before, yielding the same lower bound, now expressed in terms of $|Q|$.
\end{proof}

\begin{lemma}\label{lemma:-2I}
Suppose a graph \(G\) has \(n\) vertices, with two vertices \(u,v\) of degree \(d_1\), and all other vertices of degree \(d_2\neq d_1\). For a vertex \(x\), let \(e_x\) denote the standard basis vector, and denote 
\(E_x := e_x e_x^\top\), the projection matrix. 
Then the quantum walk generated by the generalized Laplacian \(L_k\) is equivalent (up to a global phase) to a walk generated by
\[
A + Q (E_u + E_v),
\]
where \(Q = k(d_1 - d_2)\).
\end{lemma}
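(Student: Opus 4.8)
The plan is to exploit the fact that, under the stated degree hypothesis, the degree matrix $D$ differs from a scalar multiple of the identity only in the two diagonal entries indexed by $u$ and $v$. Concretely, since $D_{xx} = d_2$ for every $x \notin \{u,v\}$ and $D_{uu} = D_{vv} = d_1$, one can write
\[
D = d_2 I + (d_1 - d_2)(E_u + E_v).
\]
Substituting this into $L_k = A + kD$ gives
\[
L_k = A + k d_2 I + k(d_1 - d_2)(E_u + E_v) = \bigl(A + Q(E_u + E_v)\bigr) + k d_2 I,
\]
with $Q = k(d_1 - d_2)$ as claimed.

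The second step is to pass from the operator identity to the statement about quantum walks. Writing $H_1 = -(A + Q(E_u+E_v))$ for the Hamiltonian of the walk on the self-loop-weighted graph, we have $-L_k = H_1 - k d_2 I$. Since $k d_2 I$ is a scalar matrix, it commutes with $H_1$, so the exponentials multiply:
\[
e^{-i L_k t} = e^{i(H_1 - k d_2 I)t} = e^{-i k d_2 t}\, e^{i H_1 t}.
\]
Thus the time-evolution operator of the $L_k$-walk equals the time-evolution operator of the $\bigl(A + Q(E_u+E_v)\bigr)$-walk multiplied by the unit-modulus scalar $e^{-i k d_2 t}$, i.e.\ the two evolutions agree up to a global phase.

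Finally, one records the consequence for transfer: for any vertices $x,y$,
\[
\bigl| (e^{-i L_k t})_{x,y} \bigr|^2 = \bigl| e^{-i k d_2 t} \bigr|^2 \bigl| (e^{i H_1 t})_{x,y} \bigr|^2 = \bigl| (e^{i H_1 t})_{x,y} \bigr|^2,
\]
so the transition probabilities — in particular the peak fidelity between $u$ and $v$ — are identical for the two walks. There is no real obstacle here; the only point requiring (minor) care is the bookkeeping of signs and the convention $H = -L_k$, together with the elementary observation that adding a multiple of the identity to a Hamiltonian changes the evolution only by an overall phase and hence leaves all probabilities invariant. This lemma then lets us transport Theorem~\ref{thm:p2} and Corollary~\ref{cor:revised} directly to the $L_k$ setting, with the self-loop weight $Q$ read off as $k(d_1-d_2)$.
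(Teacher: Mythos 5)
Your proposal is correct and follows essentially the same argument as the paper: decompose $D = d_2 I + (d_1-d_2)(E_u+E_v)$, absorb the scalar term $k d_2 I$ into a global phase, and identify the remainder with $A + Q(E_u+E_v)$. The paper's proof is just a terser version of yours, omitting the explicit exponential computation you spell out.
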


\begin{proof}
Since subtracting \(k d_2 I\) changes the evolution only by a global phase, it does not affect transfer probabilities. The remaining term is
\[
A + k(D - d_2 I) = A + k(d_1 - d_2)(E_u + E_v),
\]
which is the desired form with \(Q = k(d_1 - d_2)\).
\end{proof}

\begin{example}
Consider the complete bipartite graph \(K_{2,n-2}\) with \(n \geq 5\).  
(The special case \(n=3\) is simply the path \(P_3\), which will be treated separately.)  
In \(K_{2,n-2}\), the two vertices \(u,v\) in the part of size 2 each have degree \(n-2\),  
while every vertex in the other part has degree 2.  

It is easy to verify that \(K_{2,n-2}\) admits an involution, and moreover \(\co(u,v)=\infty\).  
Thus, by Lemma~\ref{lemma:-2I}, the Hamiltonian for the quantum walk generated by the generalized Laplacian $L_k$ is equivalent to
\[
A + Q(E_u+E_v), 
\qquad 
Q = k (n-4).
\]

By Theorem~\ref{thm:p2} together with Corollary~\ref{cor:revised}, for \(n \geq 5\) it suffices to require

\[
|k| \;>\; 16\,\frac{(n-2)^{3/2}}{\sqrt{\epsilon}\,(n-4)}.
\]
When this bound holds, the endpoint transfer fidelity is at least $1-\epsilon$.

\end{example}

\begin{example}
If \(G\) is the path \(P_n\) and \(u,v\) are its endpoints, then \(u\) and \(v\) are cospectral.  
In this case we have \(Q = -k\), and the condition simplifies to
\[
|k| \;>\; \frac{32\sqrt{2}}{\sqrt{\epsilon}}.
\]

To illustrate, we compare the endpoint transfer fidelity on the path $P_6$ under four Hamiltonians: the three standard choices and the generalized Laplacian with $k=143$, 
chosen according to the lower bound on $|k|$ when $\epsilon = 0.1$. Note that the corresponding readout time is large, consistent with the theorem~\ref{thm:p2}, 
which predicts scaling on the order of $\mathcal{O}(Q^d)$.

\begin{figure}[h]
\centering
\begin{subfigure}{0.43\textwidth}
    \centering
    \includegraphics[width=\linewidth]{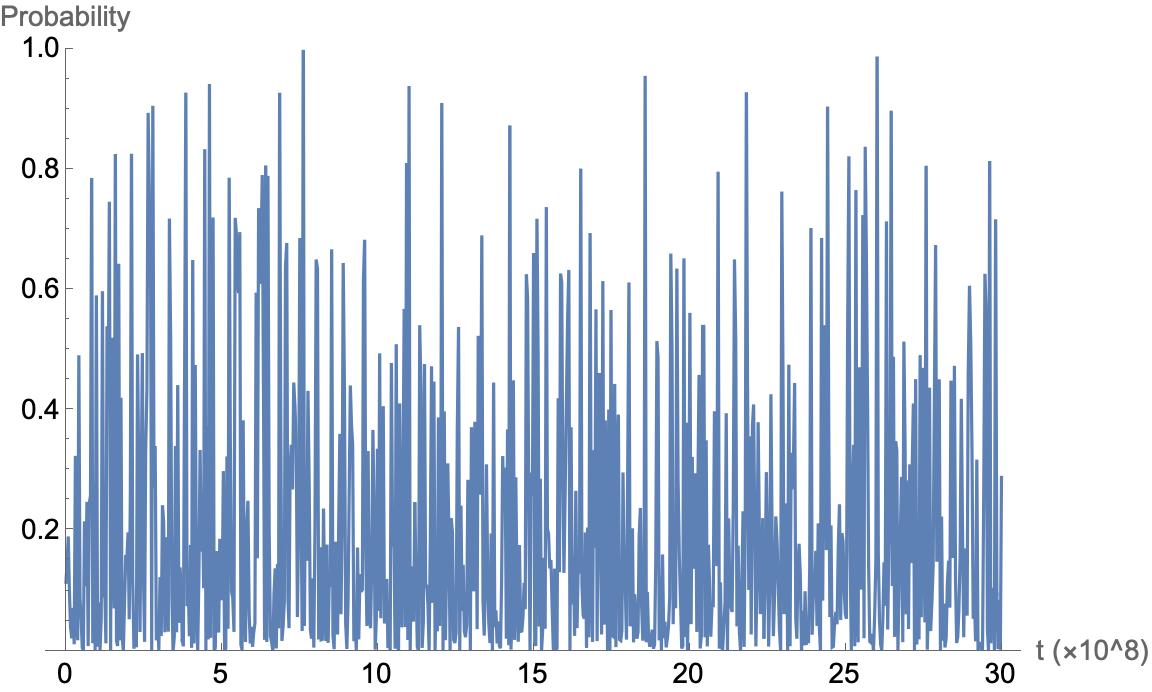}
    \caption{Adjacency matrix}
\end{subfigure}
\hfill
\begin{subfigure}{0.43\textwidth}
    \centering
    \includegraphics[width=\linewidth]{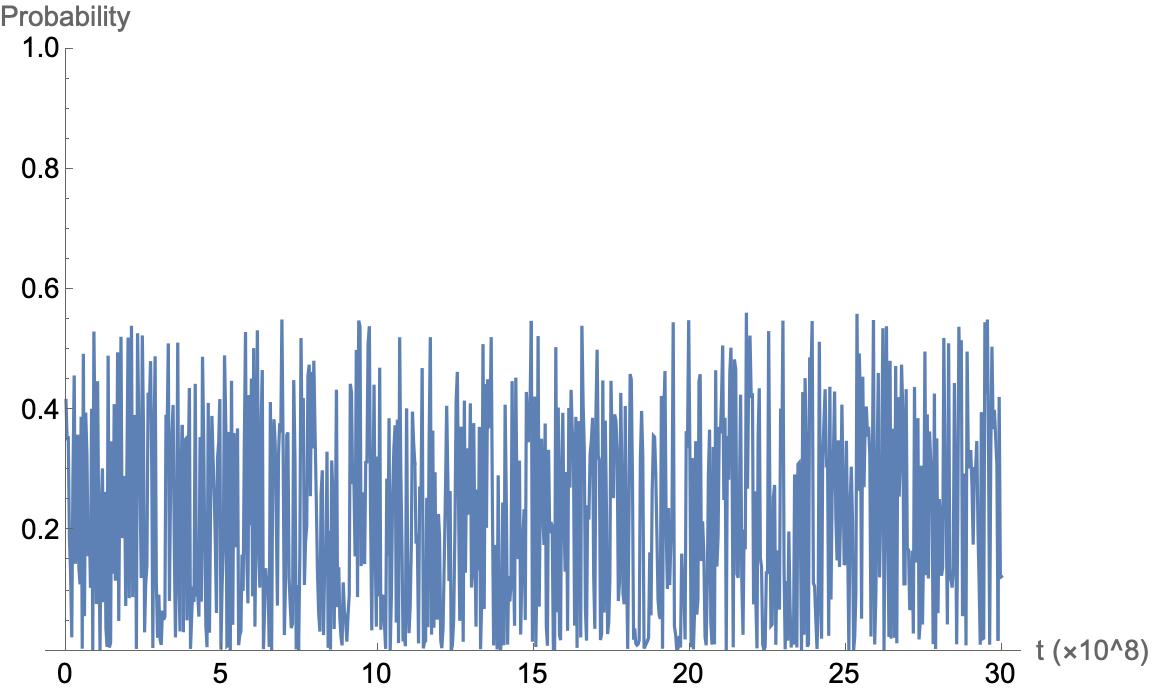}
    \caption{Laplacian}
\end{subfigure}
\hfill
\begin{subfigure}{0.43\textwidth}
    \centering
    \includegraphics[width=\linewidth]{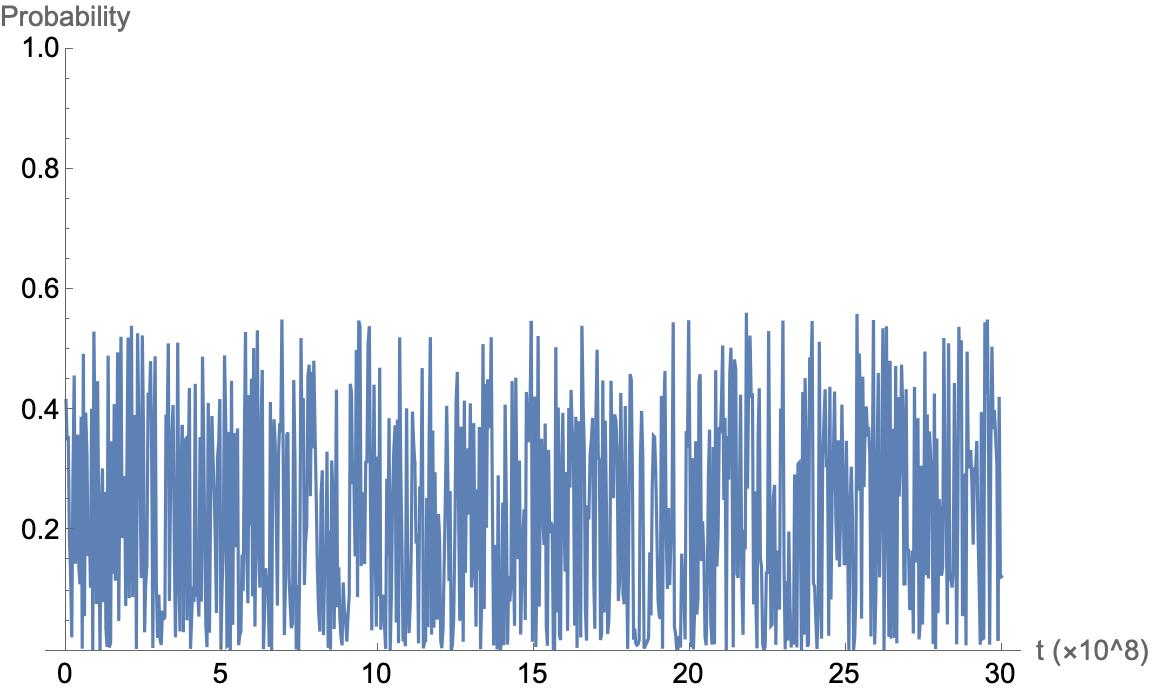}
    \caption{Signless Laplacian}
\end{subfigure}
\hfill
\begin{subfigure}{0.43\textwidth}
    \centering
    \includegraphics[width=\linewidth]{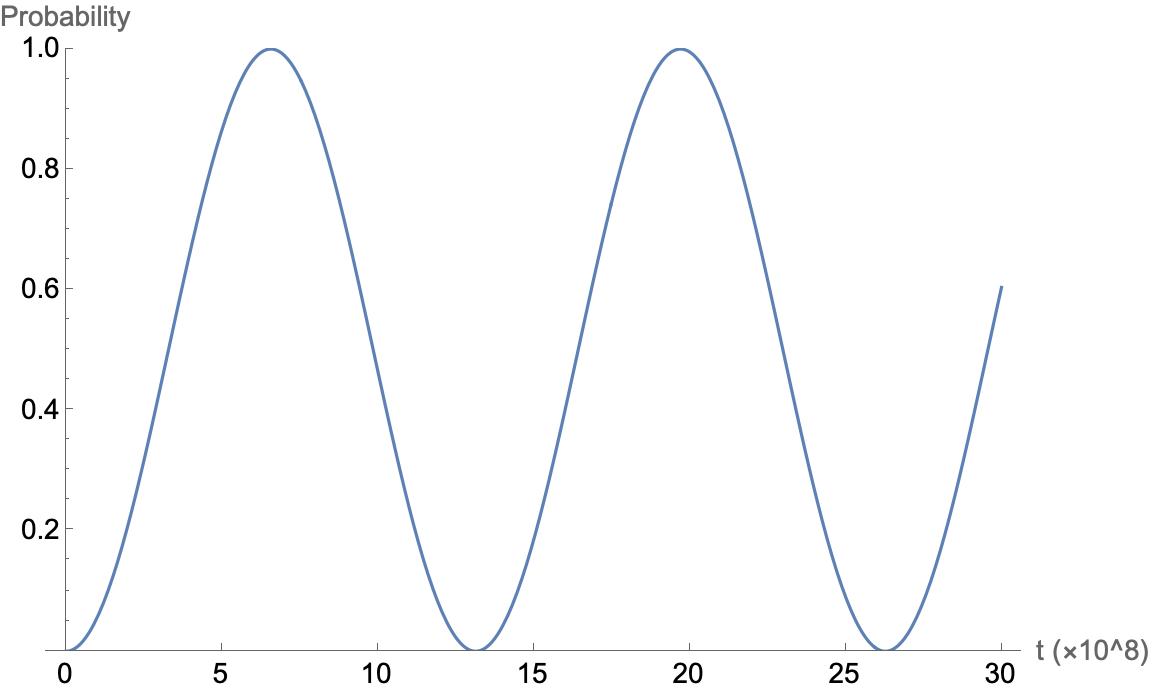}
    \caption{Generalized Laplacian, with $k=143$}
\end{subfigure}
\caption{Endpoint transfer fidelity on \(P_6\) under different Hamiltonians.}
\label{fig:p5}
\end{figure}
\end{example}

\section*{Conclusion}

This work provides an initial investigation into quantum state transfer driven by the one-parameter family of generalized Laplacians $L_k = A + kD$. 
We showed that by tuning the parameter $k$, one can obtain high-probability transfer between designated vertices, 
even in graph families where the standard adjacency and Laplacian Hamiltonians fail to do so. 
The mechanism relies on the presence of a pair of highly cospectral vertices, which allows two eigenvectors to become nearly symmetric when $k$ is large. 
This effectively reduces the dynamics to a two-dimensional subspace and enables reliable transfer with high fidelity. 
In addition, we extended earlier results on vertex-weighted graphs to obtain an explicit relation between fidelity and $k$ for all real values of $k$.

\section*{Acknowledgments}

Thanks to Thomas G.~Wong for suggesting the problem of state transfer with the generalized Laplacian. This material is based upon work supported in part by the National Science Foundation EPSCoR Cooperative Agreement OIA-2044049, Nebraska’s EQUATE collaboration. Any opinions, findings, and conclusions or recommendations expressed in this material are those of the author(s) and do not necessarily reflect the views of the National Science Foundation.

\printbibliography
\end{document}